\documentclass[conference]{IEEEtran}
\IEEEoverridecommandlockouts 	
\usepackage{amsmath,amssymb}
\usepackage{amsthm}
\usepackage{color}
\usepackage[mathscr]{eucal}
\usepackage{graphics,graphicx,multicol}
\usepackage{epsfig}
\usepackage{enumerate}
\usepackage{subfigure}
\usepackage{algorithmic}
\usepackage{algorithm}
\usepackage{morefloats}
\usepackage{enumerate}
\usepackage{subfigure}
\usepackage{multirow}
\usepackage{array}
\usepackage{pstricks, pst-node, pst-plot, pst-circ}
\usepackage{moredefs}
\usepackage{cite}
\usepackage{courier}
\newtheorem{thm}{Theorem}[section] 
\newtheorem{lem}[thm]{Lemma}

\pagenumbering{arabic}
\pagestyle{plain}
\usepackage{epstopdf}
\DeclareGraphicsExtensions{.eps}
%


\begin{document}

\title{An Optimal Application-Aware Resource Block Scheduling in LTE}
\author{Tugba Erpek, Ahmed Abdelhadi, and T. Charles Clancy \\
Hume Center, Virginia Tech, Arlington, VA, 22203, USA\\
\{terpek, aabdelhadi, tcc\}@vt.edu
}
\maketitle

\begin{abstract}
In this paper, we introduce an approach for application-aware resource block scheduling of elastic and inelastic adaptive real-time traffic in fourth generation Long Term Evolution (LTE) systems. The users are assigned to resource blocks. A transmission may use multiple resource blocks scheduled over frequency and time. In our model, we use logarithmic and sigmoidal-like utility functions to represent the users applications running on different user equipments (UE)s. We present an optimal problem with utility proportional fairness policy, where the fairness among users is in utility percentage (i.e user satisfaction with the service) of the corresponding applications. Our objective is to allocate the resources to the users with priority given to the adaptive real-time application users. In addition, a minimum resource allocation for users with elastic and inelastic traffic should be guaranteed. Every user subscribing for the mobile service should have a minimum quality-of-service (QoS) with a priority criterion. We prove that our scheduling policy exists and achieves the maximum. Therefore the optimal solution is tractable. We present a centralized scheduling algorithm to allocate evolved NodeB (eNodeB) resources optimally with a priority criterion. Finally, we present simulation results for the performance of our scheduling algorithm and compare our results with conventional proportional fairness approaches. The results show that the user satisfaction is higher with our proposed method.
\end{abstract}

\begin{keywords}
LTE, Resource Block Scheduling, Application-Aware, Convex Optimization
\end{keywords}
\pagenumbering{gobble}
\providelength{\AxesLineWidth}       \setlength{\AxesLineWidth}{0.5pt}%
\providelength{\plotwidth}           \setlength{\plotwidth}{8cm}
\providelength{\LineWidth}           \setlength{\LineWidth}{0.7pt}%
\providelength{\MarkerSize}          \setlength{\MarkerSize}{3pt}%
\newrgbcolor{GridColor}{0.8 0.8 0.8}%
\newrgbcolor{GridColor2}{0.5 0.5 0.5}%

\section{Introduction}\label{sec:intro}

The area of resource allocation optimization has received significant interest as the operators face an increasing demand for mobile data traffic. According to Cisco VNI Mobile Forecast Highlights \cite{CiscoVNIForecast}, globally, the mobile data traffic will grow 11-fold from 2013 to 2018 and there will be 4.9 billion mobile users by 2018, up from 4.1 billion in 2013. 

LTE technology offers increased user data rates, improved spectral efficiency and greater flexibility of spectrum usage. The resource scheduling algorithms are not defined in the LTE standard. It is the responsibility of the vendors to implement the optimal algorithms to increase the user satisfaction in a spectrally-efficient way. The mobile application types include both elastic and inelastic traffic. The expectations of the users change depending on the traffic type. Elastic traffic can adjust to wide range of changes in delay and throughput and still meet the user expectations. Inelastic traffic has strict latency and throughput requirements. Real time applications such as VoIP and video streaming are examples of inelastic traffic. It is presented in \cite{YouTubeTraffic} that the video applications such as YouTube's progressive download starts by transferring a significant amount of data in the player's buffer in order to mitigate future re-buffering events. The user running this application should be allocated more bandwidth during this phase.

The key inputs to the resource scheduling process are common. Mainly, two different resource allocation categories can be identified as the opportunistic scheduling and proportional fair scheduling \cite{LTEBook}. It is difficult to ensure fairness and QoS with the opportunistic scheduling. Proportional fair scheduling pays more attention to the user QoS requirements. As a result, there has been increasing research on the proportional fair scheduling algorithms \cite{PropFairScheduling}.

In this paper, we focus on finding the optimal solution for an application-aware resource scheduling problem for LTE systems. We present a survey of related research papers in the following subsection.  

\subsection{Related Work}\label{sec:related}

In \cite{Ahmed_Utility1, Ahmed_Utility2, Ahmed_Utility3}, the authors present optimal rate allocation algorithms for users covered by a single carrier eNodeB. The authors use logarithmic and sigmoidal-like utility functions to represent delay-tolerant and real-time applications, respectively. In \cite{Ahmed_Utility1}, the rate allocation algorithm gives priority to real-time applications over delay-tolerant applications when allocating resources as the utility proportional fairness  rate allocation policy is used. In \cite{Ahmed_Utility2}, the convergence analysis of the single carrier resource allocation algorithm is presented. In \cite{Ahmed_Utility3}, two-stage resource allocation for multi-application users covered by a single carrier is presented.

In \cite{Haya_Utility1}, the authors present multiple-stage carrier aggregation with utility proportional fairness resource allocation algorithm. The users allocate the resources from the primary carrier eNodeB until all the resources in the eNodeB are allocated. The users switch to the secondary carrier eNodeB to allocate more resources, and so forth. In \cite{Haya_Utility2}, spectrum sharing of public safety and commercial LTE bands is assumed. The authors presented a resource allocation algorithm with priority given to the public safety users. In \cite{Ahmed_Utility4}, authors present a joint carrier aggregation resource allocation where the allocated rates are optimal. In \cite{MultiBand}, a distributed solution of resource allocation for proportional fairness is provided for multi-band wireless systems. The proposed approach is not specific to the LTE systems. In \cite{SelfOrganizedLTE}, a distributed protocol that aims to achieve weighted proportional fairness among UEs for LTE systems is presented. A resource block scheduling problem is formulated as a convex optimization problem. A weighted proportional fairness among all the UEs is achieved by setting a priori priority indicator. The weights play an important role while solving the optimization problem however the optimal resource scheduling is not guaranteed since the weights are set a priori. All the UE applications are treated the same whenever the initial weights are set equal.   

\subsection{Our Contributions}\label{sec:contributions}
Our contributions in this paper are summarized as:
\begin{itemize}
\item We introduce an application-aware scheduling scheme that involves users with real-time and delay-tolerant applications. The proposed scheduling scheme gives priority to real-time application users while allocating resource blocks. 
\item We prove that the proposed scheduling scheme exists and is optimal.
\end{itemize}

The remainder of this paper is organized as follows. Section \ref{sec:Problem_formulation} presents the system model and problem formulation. Section \ref{sec:Proof} proves the global optimal solution exists and is tractable. In Section \ref{sec:Algorithm}, we present our centralized resource block scheduling algorithm for the utility proportional fairness optimization problem. Section \ref{sec:sim} discusses simulation setup and provides quantitative results along with discussion. Section \ref{sec:conclude} concludes the paper.

\section{System Model and Problem Setup}\label{sec:Problem_formulation}
LTE downlink transmission resources has time, frequency and space dimensions. Using multiple transmit and receive antennas provide the spatial dimension. The frequency dimension is divided to subcarriers. The time dimension is first divided to 10 ms radio frames. Frames are further subdivided into ten 1 ms subframes, each of which is split into two 0.5 ms slots \cite{LTEBook}. 

The smallest unit of resource is the resource element. A resource element consists of one subcarrier for a duration of one OFDM symbol. A resource block is comprised of 12 continuous subcarriers. It has 84 resource elements in the case of the normal cyclic prefix length, and 72 resource elements in the case of the extended cyclic prefix. Both time-division multiplexing and frequency-division multiplexing can be achieved with LTE systems. We focus on the latter in this paper. 

A resource block can be allocated to only one user for reuse-1 radio systems. For a centralized resource block scheduling algorithm, the eNodeB decides which UE will be allocated for each resource block. We use the same problem setup as in \cite{SelfOrganizedLTE}. Without loss of generality, we define $B$ to be the set of eNodeBs, $M$ to be the set of UEs and $Z$ to be the set of resource blocks. We use $z \in Z$ to denote a single resource block. The total throughput allocated by the eNodeB to the $i^{th}$ UE over all the resource blocks is given by $r_i$. Each UE has its own utility function $U_i(r_i)$ that corresponds to the type of traffic being handled by the UE. Our objective is to determine which resource blocks the eNodeB should allocate to each UE. 

\subsection{User Throughput}\label{sec:throughput}

We denote the throughput of UE $i$ on resource block $z$ when it is scheduled by eNodeB $b(i)$ as $H_{i,b(i),z}$. In each frame, eNodeB $b(i)$ schedules one UE in each of the resource blocks in the frame. Let $\phi_{i,b(i),z}$ be the proportion of the frames that UE $i$ is scheduled by eNodeB $b(i)$ in resource block $z$. The overall throughput of UE $i$, which is the sum of its throughput over all the resource blocks can be written as:
\begin{equation}\label{eqn:throughput}
r_i = \sum_{z\epsilon Z}\phi_{i,b(i),z} H_{i,b(i),z} 
\end{equation}

\subsection{Utility Functions}\label{sec:utilities}

We use the same utility functions as in \cite{Ahmed_Utility1}. We assume all user utilities $U_i(r_i)$ in this model are strictly concave or sigmoidal-like functions. The utility functions have the following properties: 
\begin{itemize}
\item $U_i(0) = 0$ and $U_i(r_i)$ is an increasing function of $r_i$.
\item $U_i(r_i)$ is twice continuously differentiable in $r_i$.
\end{itemize}
In our model, we use the normalized sigmoidal-like utility function, as in \cite{DL_PowerAllocation}, that can be expressed as 
\begin{equation}\label{eqn:sigmoid}
U_i(r_i) = c_i\Big(\frac{1}{1+e^{-a_i(r_i-b_i)}}-d_i\Big)
\end{equation}
where $c_i = \frac{1+e^{a_ib_i}}{e^{a_ib_i}}$ and $d_i = \frac{1}{1+e^{a_ib_i}}$. So, it satisfies $U(0)=0$ and $U(\infty)=1$. In Figure \ref{fig:SigLogUtility}, the normalized sigmoidal-like utility function with $a=5$ and $b=10$ is a good approximation for a step function (e.g. VoIP), and $a=0.5$ and $b=20$ is a good approximation to an adaptive real-time application (e.g. video streaming). Additionally, we use the normalized logarithmic utility function, as in \cite{UtilityFairness}, that can be expressed as 
\begin{equation}\label{eqn:log}
U_i(r_i) = \frac{\log(1+k_ir_i)}{\log(1+k_i r_{max})}
\end{equation}
where $r_{max}$ is the required rate for the user to achieve 100\% utility percentage and $k_i$ is the rate of increase of utility percentage with the allocated rate $r_i$. So, it satisfies $U(0)=0$ and $U(r_{max})=1$. The logarithmic utility functions with $k=15$ and $k=0.1$ are also shown in Figure \ref{fig:SigLogUtility} representing the delay tolerant traffic.

\begin{figure}
    \centering
    \includegraphics[width=3.5in]{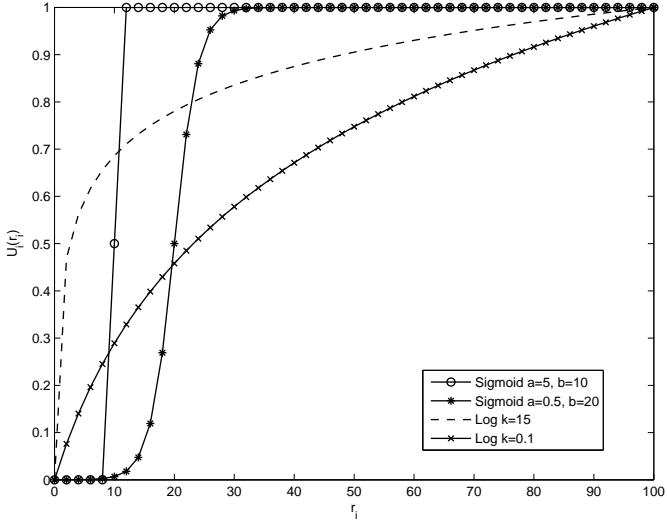}
    \caption{The sigmoidal-like and logarithmic utility functions.}
    \label{fig:SigLogUtility}
\end{figure} 

\subsection{Scheduling Problem}\label{sec:scheduling}

We consider the utility proportional fairness objective function given by 
\begin{equation}\label{eqn:utility_fairness}
\begin{aligned}
& \underset{\textbf{$\phi_{i,b(i),z}$}} {\text{max}}
& & \prod_{i=1}^{M}U_i(\sum_{z\epsilon Z}\phi_{i,b(i),z} H_{i,b(i),z}) \\
\end{aligned}
\end{equation}
where $M$ is the number of UEs in the coverage area of the eNodeB. The goal of this resource scheduling objective function is to allocate the resources for each UE that maximizes the total cellular network objective (i.e. the product of the utilities of all the UEs) while ensuring proportional fairness between individual utilities. This resource scheduling objective function ensures non-zero resource allocation for all users. Therefore, the corresponding resource scheduling optimization problem guarantees minimum QoS for all users. In addition, this approach allocates more resources to users with real-time applications providing improvement to the QoS of LTE system.

The basic formulation of the utility proportional fairness resource scheduling problem is given by the following optimization problem:
\begin{equation}\label{eqn:opt_prob_fairness}
\begin{aligned}
& \underset{\textbf{$\phi_{i,b(i),z}$}} {\text{max}}
& & \prod_{i=1}^{M}U_i(\sum_{z\epsilon Z}\phi_{i,b(i),z} H_{i,b(i),z}) \\
& \text{subject to}
& & \sum_{i=1}^{M}\phi_{i,b(i),z} =1\\
& & &  \phi_{i,b(i),z} \geq 0, \;\;\;\;\; i = 1,2, ...,M.
\end{aligned}
\end{equation}
We prove in Section \ref{sec:Proof} that there exists a tractable global optimal solution to the optimization problem (\ref{eqn:opt_prob_fairness}).

\section{The Global Optimal Solution}\label{sec:Proof}

In the optimization problem (\ref{eqn:opt_prob_fairness}), since the objective function $\arg \underset{\textbf{$\phi_{i,b(i),z}$}} \max \prod_{i=1}^{M}U_i(\sum_{z\epsilon Z}\phi_{i,b(i),z} H_{i,b(i),z})$ is equivalent to $\arg \underset{\textbf{$\phi_{i,b(i),z}$}} \max \sum_{i=1}^{M}\log(U_i(\sum_{z\epsilon Z}\phi_{i,b(i),z} H_{i,b(i),z}))$, then optimization problem (\ref{eqn:opt_prob_fairness}) can also be written as:

\begin{equation}\label{eqn:opt_prob_fairness_mod}
\begin{aligned}
& \underset{\textbf{$\phi_{i,b(i),z}$}} {\text{max}}
& & \sum_{i=1}^{M}\log(U_i(\sum_{z\epsilon Z}\phi_{i,b(i),z} H_{i,b(i),z})) \\
& \text{subject to}
& & \sum_{i=1}^{M}\phi_{i,b(i),z} =1\\
& & &  \phi_{i,b(i),z} \geq 0, \;\;\;\;\; i = 1,2, ...,M.
\end{aligned}
\end{equation}

The utility functions $\log(U_i(\sum_{z\epsilon Z}\phi_{i,b(i),z} H_{i,b(i),z}))$ in the optimization problem (\ref{eqn:opt_prob_fairness_mod}) are strictly concave functions as proved in \cite{Ahmed_Utility1}. As a result, the optimization problem (\ref{eqn:opt_prob_fairness_mod}) is a convex optimization problem and there exists a unique tractable global optimal solution \cite{Ahmed_Utility1}. It follows that the optimization problem (\ref{eqn:opt_prob_fairness}) is also convex. 

Online algorithms are used when the entire input is not available from the start. The input is processed one-by-one in a serial fashion with this approach. Calculating the total throughput of a UE over all resource elements, $r_i$, requires the knowledge of $\phi_{i,b(i),z}$. In \cite{SelfOrganizedLTE}, an online scheduling algorithm is proposed to decrease the computation overhead. We also use an online scheduling algorithm in order to process the throughput information piece-by-piece while solving our optimization problem.
Let $\phi_{i,b(i),z}[k]$ be the proportion of the frames that the resource block $z$ is scheduled for UE $i$ in the first $k$ frames. Then, we can define the proportion of the frames that the resource block $z$ is scheduled for $i$ in the $[k+1]^{th}$ frame as:
\begin{equation*}\label{eqn:online_algorithm}
\begin{aligned}
\phi_{i,b(i),z}[k+1]=
\begin{cases}
	\frac{k-1}{k}\phi_{i,b(i),z}[k]+\frac{1}{k},\\ 
	\text{if UE $i$ is scheduled for $z$}\\
	\text{in $(k+1)^{th}$ frame}\\
	\frac{k-1}{k}\phi_{i,b(i),z}[k],\text{otherwise}\\
\end{cases}
\end{aligned}
\end{equation*}

In our scheduling policy, the eNodeB schedules for the UE that maximizes $\frac{U'_i(\phi_{i,b(i),z})H_{i,b(i),z}}{U_i(\phi_{i,b(i),z})}$ while solving the scheduling problem.\\      

\begin{lem}\label{lem:optimality}  
Using the above scheduling policy, we show that $\lim \inf_{k \rightarrow \infty} \sum \log U_i(\sum_z\phi_{i,b(i),z}H_{i,b(i),z})$ exists for optimization problem (\ref{eqn:opt_prob_fairness}).
\end{lem}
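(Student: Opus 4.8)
The plan is to recognize the online update as a Frank--Wolfe (conditional gradient) iteration for the concave program \eqref{eqn:opt_prob_fairness_mod} and to transcribe the corresponding convergence analysis. Write $f(\phi)=\sum_{i=1}^{M}\log U_i(r_i)$ with $r_i=\sum_{z\in Z}\phi_{i,b(i),z}H_{i,b(i),z}$, over the feasible region $\mathcal P=\{\phi\ge 0:\ \sum_{i=1}^{M}\phi_{i,b(i),z}=1\ \text{for every }z\in Z\}$, a Cartesian product of probability simplices and hence compact and convex. By the discussion preceding the lemma, $f$ is concave on $\mathcal P$ and attains its maximum at some $\phi^\star$; set $f^\star=f(\phi^\star)$. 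Since each $U_i\le 1$ we have $f\le 0$ on $\mathcal P$, so $f$ is bounded above and $f(\phi[k])\le f^\star\le 0$ for all $k$.

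First I would identify the iteration. The recursion for $\phi_{i,b(i),z}[k+1]$ rewrites as $\phi[k+1]=\phi[k]+\tfrac1k\bigl(v[k]-\phi[k]\bigr)$, where $v[k]$ is the vertex of $\mathcal P$ that puts, in every resource block $z$, the whole block on the UE scheduled in frame $k+1$. Since the scheduler chooses, block by block, the UE maximizing $\tfrac{U_i'(r_i)}{U_i(r_i)}H_{i,b(i),z}=\partial f/\partial\phi_{i,b(i),z}$, we get $v[k]\in\arg\max_{v\in\mathcal P}\langle\nabla f(\phi[k]),v\rangle$; that is, the algorithm is conditional gradient ascent with step size $\gamma_k=1/k$.

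Next comes the convergence estimate. On a compact subset of $\mathcal P$ on which $\nabla f$ is $L$-Lipschitz, the quadratic upper bound gives $f(\phi[k+1])\ge f(\phi[k])+\tfrac1k g[k]-\tfrac{LD^2}{2k^2}$, where $D=\operatorname{diam}\mathcal P$ and $g[k]=\langle\nabla f(\phi[k]),v[k]-\phi[k]\rangle$; concavity of $f$ together with feasibility of $\phi^\star$ gives $g[k]\ge f^\star-f(\phi[k])=:h[k]\ge 0$. Combining the two inequalities yields $h[k+1]\le(1-\tfrac1k)h[k]+\tfrac{LD^2}{2k^2}$, and a routine induction (multiply through by $k$ and telescope) gives $h[k]=O\!\left(\tfrac{\log k}{k}\right)\to 0$. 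Hence $f(\phi[k])\to f^\star$; in particular $\sum_{i}\log U_i\bigl(\sum_{z}\phi_{i,b(i),z}H_{i,b(i),z}\bigr)$ converges, so its $\liminf$ exists (and equals the optimum).

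The step that needs real care --- and the main obstacle --- is the boundary of $\mathcal P$: $\log U_i(r_i)\to-\infty$ as $r_i\to 0^+$, so $f$ is neither finite nor $C^1$ on all of $\mathcal P$ and the $L$-Lipschitz hypothesis above is unavailable globally. I would remove this via a self-correction argument: for both utility families, $U_i'(r)/U_i(r)\to\infty$ as $r\to 0^+$ (for the logarithmic form since $\log(1+k_ir)\to 0$, for the sigmoidal-like form since $U_i(0)=0$ while $U_i'(0)>0$), whereas a UE whose rate is already positive has a bounded ratio; hence any UE whose allocated rate falls too low is scheduled in the next frame, which forces $r_i[k]\ge\delta>0$ uniformly for all $i$ and all $k\ge K_0$. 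For $k\ge K_0$ the iterates then live in a compact region on which $f$ is finite and $C^2$, so $L$ and $D$ are finite and the previous paragraph applies. Making the lower bound $\delta$ and the index $K_0$ precise, without circularity, is where most of the work goes; a natural alternative, along the lines of \cite{SelfOrganizedLTE}, is to bound the per-frame increment $f(\phi[k+1])-f(\phi[k])$ directly and show its negative part is summable, so that $f(\phi[k])$ has bounded variation and therefore converges.
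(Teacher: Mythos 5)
Your proposal is correct in outline but takes a genuinely different route from the paper. The paper's proof of this lemma never invokes concavity: it Taylor-expands $L(\phi[k]+\Delta\phi[k])$, notes that the first-order term for each resource block, $\frac{1}{k}\bigl(\max_i \tfrac{U_i'H_{i,b(i),z}}{U_i} - \sum_i \phi_{i,b(i),z}[k]\,\tfrac{U_i'H_{i,b(i),z}}{U_i}\bigr)$, is nonnegative because a maximum dominates a convex combination, and concludes $L(\phi[k+1])\ge L(\phi[k])-a/k^2$; since the per-step decrease is summable and $L$ is bounded above, $L(\phi[k])$ converges to its $\limsup$ --- exactly the ``summable negative increments'' fallback you mention at the end. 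Optimality of the limit is deferred to Theorem \ref{thm:limitmax} and proved separately by contradiction. Your Frank--Wolfe reading subsumes both steps: the identification of the scheduler as the linear-maximization oracle over a product of simplices with step $1/k$ is accurate, and by additionally using concavity to lower-bound the per-step gap by the suboptimality you obtain convergence to the maximum with an $O(\log k/k)$ rate, i.e., the Lemma and the Theorem at once, with a quantitative bonus the paper does not give. What the paper's weaker argument buys is that it only needs the sign of the gap term; but it still needs smoothness along the trajectory, hidden in the unjustified constant $a$ bounding the Taylor remainder. The boundary obstacle you flag is therefore real, and it afflicts the paper's proof equally: where some $r_i\to 0^+$ we have $\log U_i(r_i)\to-\infty$ and the second derivative blows up, so no uniform $a$ (or Lipschitz constant $L$) exists --- indeed Algorithm \ref{alg:eNodeB} initializes $\phi=0$, where $L(\phi)=-\infty$. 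Your self-correction sketch (the ratio $U_i'/U_i$ diverges as $r_i\to 0^+$, so starved UEs win subsequent blocks and the iterates are eventually confined to a compact region with all $r_i\ge\delta>0$) is the right way to make either version rigorous; once that is in place, both your $(L,D)$ constants and the paper's $a$ are finite and the two arguments go through, so the remaining work you identify is a shared debt rather than a defect specific to your approach.
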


\begin{proof}
We define $L(\phi)=\sum_{i=1}^M \log U_i(\sum_z\phi_{i,b(i),z}H_{i,b(i),z})$ where $\phi$ is the short form of $\phi_{i,b(i),z}$ and $\phi[k]$ is the short form of $\phi_{i,b(i),z}[k]$. Using Taylor's theorem, for any $\phi$ and $\Delta\phi\\$

$L(\phi+\Delta\phi)=L(\phi)+L'(\phi)\Delta\phi+\pi(\phi,\Delta\phi)\\$ 

where $|\pi(\phi+\Delta\phi)|<a |\Delta\phi|^2$, for some constant $a$. 

Let $\Delta\phi_{i,b(i),z}[k]=\phi_{i,b(i),z}[k+1]-\phi_{i,b(i),z}[k]$, then 
\begin{equation*}\label{eqn:delta_phi_eqn}
\begin{aligned}
\Delta\phi_{i,b(i),z}[k]=
\begin{cases}
	\frac{1}{k}-\frac{\phi_{i,b(i),z}[k]}{k},\\ 
	\text{if UE $i$ is scheduled for $z$ in $(k+1)^{th}$ frame}\\
	\frac{-\phi_{i,b(z),z}[k]}{k},\text{otherwise}\\
\end{cases}
\end{aligned}
\end{equation*}
 
$|\Delta\phi_{i,b(i),z}[k]|<\frac{1}{k}$, for all $i$ and $z$. As a result;

\begin{equation*}\label{eqn:L_inequality}
\begin{aligned}
L(\phi[k+1]) & = L(\phi[k]+\Delta\phi[k]),\\
& \geq L(\phi[k])+\Delta 
L(\phi[k])-\frac{a}{k^2},\\
&=L(\phi[k])+ \Big( \sum_{b \in B,z \in Z}\sum_i \frac{U'_i(\phi_{i,b(i),z})H_{i,b(i),z}}{U_i(\phi_{i,b(i),z})}\\
&\Delta\phi_{i,b(i),z[k]} \Big) -\frac{a}{k^2} \\
&=L(\phi[k])+\frac{1}{k}\sum_{b \in B,z \in Z}\Big (\max_i\frac{U'_i(\phi_{i,b(i),z})H_{i,b(i),z}}{U_i(\phi_{i,b(i),z})} \\
&-\sum_i\frac{U'_i(\phi_{i,b(i),z})H_{i,b(i),z}}{U_i(\phi_{i,b(i),z})}\Big )-\frac{a}{k^2}
\end{aligned}
\end{equation*}
Since $\sum_i\phi_{i,b(i),z[k]}=1$ based on (\ref{eqn:opt_prob_fairness}), the last equation can be written as 
$L(\phi[k]+\Delta\phi[k]) \geq L(\phi[k])-\frac{a}{k^2} \\$.

Let $\beta := \lim \sup_{k \rightarrow \infty} L(\phi[k])$. For any $\epsilon > 0$, there exists large enough $K$ so that $L(\phi[K])>\beta-\frac{\epsilon}{2}$ and $\sum_{k=K}^{\infty}\frac{a}{k^2}<\frac{\epsilon}{2}$. For any $\hat{k}>K$, $L(\phi[\hat{k}]) \geq L(\phi[K])-\sum_{k=K}^{\hat{k}}\frac{a}{k^2} > \beta-\epsilon$. Therefore, $L(\phi[k])$ converges to $\beta$, as $k \rightarrow \infty$. \\  
   
Due to the constraint $\sum_{i=1}^M \phi_{i,b(i),z} = 1$ in (\ref{eqn:opt_prob_fairness_mod}), $\phi$ is a solution to the optimization problem if and only if 

\begin{equation}\label{eqn:opt_Maximization}
\begin{aligned}
\frac{dL}{d\phi_{i,b(i),z}} & =\frac{U'_i(\phi_{i,b(i),z})H_{i,b(i),z}}{U_i(\phi_{i,b(i),z})} \\ 
& =\max_j\frac{U'_j(\phi_{j,b(j),z})H_{j,b(j),z}}{U_j(\phi_{j,b(j),z})} 
\end{aligned}
\end{equation}
for all $i$ and $z$ such that $\sum_{i=1}^{M}\phi_{i,b(i),z}=1$ and $\phi_{i,b(i),z} \geq 0$.

\end{proof}

\begin{thm}\label{thm:limitmax}
Using the scheduling policy (\ref{eqn:opt_Maximization}), $\lim \inf_{k \rightarrow \infty} \sum \log U_i(\sum_z\phi_{i,b(i),z}H_{i,b(i),z})$ achieves the maximum of the optimization problem (\ref{eqn:opt_prob_fairness}). 
\end{thm}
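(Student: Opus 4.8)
The plan is to combine the descent-type inequality established inside the proof of Lemma~\ref{lem:optimality} with a compactness argument on the feasible set to show that the limit $\beta := \lim_{k\to\infty} L(\phi[k])$ (whose existence is exactly the content of Lemma~\ref{lem:optimality}) coincides with the optimal value $L^\star$ of (\ref{eqn:opt_prob_fairness_mod}), hence of (\ref{eqn:opt_prob_fairness}).

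First I would record the static facts. The feasible set $\mathcal{F} = \{\phi : \sum_{i}\phi_{i,b(i),z}=1,\ \phi_{i,b(i),z}\ge 0\}$ is a Cartesian product of probability simplices, hence compact, and on it $r_i = \sum_{z}\phi_{i,b(i),z}H_{i,b(i),z}$ is bounded, so $L(\phi)$ is bounded above. Combined with the strict concavity of $\log U_i$ noted after (\ref{eqn:opt_prob_fairness_mod}) and proven in \cite{Ahmed_Utility1}, the convex program (\ref{eqn:opt_prob_fairness_mod}) attains a unique global maximizer $\phi^\star$ with value $L^\star=L(\phi^\star)$, and by the first-order (KKT) condition already written in (\ref{eqn:opt_Maximization}), $\phi$ is optimal if and only if the nonnegative gap
\begin{equation*}
D(\phi) := \sum_{b\in B,\, z\in Z}\Big( \max_i \frac{U_i'(r_i)\,H_{i,b(i),z}}{U_i(r_i)} - \sum_i \phi_{i,b(i),z}\,\frac{U_i'(r_i)\,H_{i,b(i),z}}{U_i(r_i)} \Big)
\end{equation*}
vanishes; note $D(\phi)\ge 0$ always, since $\sum_i\phi_{i,b(i),z}=1$ makes the inner sum a convex combination bounded by the maximum.

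Next I would extract the sharper form of the inequality derived in Lemma~\ref{lem:optimality}, before its final simplification: under the scheduling policy (\ref{eqn:opt_Maximization}) it reads $L(\phi[k+1]) \ge L(\phi[k]) + \tfrac{1}{k}D(\phi[k]) - \tfrac{a}{k^2}$. Summing from $K$ to $N$ and letting $N\to\infty$, the left-hand telescoping sum is finite (because $L(\phi[k])\to\beta$) and $\sum_k a/k^2<\infty$, so $\sum_k \tfrac{1}{k}D(\phi[k])<\infty$. Since $\sum_k 1/k$ diverges and $D(\phi[k])\ge 0$, this forces $\liminf_{k\to\infty} D(\phi[k]) = 0$. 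Choose a subsequence $k_j$ with $D(\phi[k_j])\to 0$; by compactness of $\mathcal{F}$ pass to a further subsequence along which $\phi[k_j]\to\phi^\infty\in\mathcal{F}$. By continuity of $D$ on the relevant region, $D(\phi^\infty)=0$, so $\phi^\infty$ satisfies (\ref{eqn:opt_Maximization}) and hence, by convexity, is a global maximizer: $L(\phi^\infty)=L^\star$. Finally, continuity of $L$ gives $L(\phi[k_j])\to L(\phi^\infty)=L^\star$, while Lemma~\ref{lem:optimality} gives $L(\phi[k_j])\to\beta$; therefore $\beta=L^\star$, i.e. $\lim_{k\to\infty}\sum_i\log U_i(\sum_z\phi_{i,b(i),z}H_{i,b(i),z}) = L^\star$, which is the maximum of (\ref{eqn:opt_prob_fairness_mod}) and, equivalently via the monotone logarithm transform, of (\ref{eqn:opt_prob_fairness}).

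I expect the main technical obstacle to be the continuity and well-definedness of $L$ and $D$ along the trajectory: the ratio $U_i'(r_i)/U_i(r_i)$ blows up as $r_i\downarrow 0$ and $\log U_i(r_i)\to-\infty$ there, so one must argue that the greedy policy (\ref{eqn:opt_Maximization}) keeps every $r_i$ bounded away from zero (or handle the initial frames, e.g. by a minimal guaranteed allocation per UE) so that the limit point $\phi^\infty$ lies in the region where the utilities are positive and the first-order characterization applies. A secondary, routine point is confirming that the KKT condition for the log-sum (equivalently product) objective over the product of simplices is precisely the equalization/maximization condition (\ref{eqn:opt_Maximization}); once positivity of the $r_i$ is secured, this and the subsequential compactness step are standard.
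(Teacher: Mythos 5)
Your proposal is correct in substance but follows a genuinely different route from the paper. The paper proves the theorem by contradiction: assuming the limit $\beta$ from Lemma~\ref{lem:optimality} is not the maximum, it asserts (without further justification) the existence of uniform constants $\delta,\lambda>0$ such that for every large $k$ some user--block pair has $\phi_{i_k,b(i_k),z_k}[k]>\delta$ with its ratio $U'H/U$ sitting $\lambda$ below the per-block maximum, and then reuses the Taylor expansion to get $L(\phi[k+1])-L(\phi[k])\geq \frac{\delta\lambda}{2k}$ for large $k$, so that $\sum_k 1/k=\infty$ contradicts the boundedness $L\leq 0$. You instead retain the nonnegative gap $D(\phi[k])$ that the paper throws away at the end of Lemma~\ref{lem:optimality}, sum the descent inequality to get $\sum_k D(\phi[k])/k<\infty$, conclude $\liminf_k D(\phi[k])=0$, and then use compactness of the product of simplices, continuity, and concavity to identify a subsequential limit with a KKT point, forcing $\beta=L^\star$. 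Both arguments run on the same engine (the descent inequality plus divergence of the harmonic series), but yours is direct and actually substantiates the uniform-gap existence step that the paper merely asserts; it also correctly restores the $\phi_{i,b(i),z}$ weights in the inner sum, so the gap is a maximum minus a convex combination and hence nonnegative (the paper's display drops these weights, which is evidently a typo). What the paper's route buys is brevity and no explicit appeal to compactness or continuity of the gap function; what yours buys is rigor at exactly that point. Two caveats remain: the boundary issue you flag yourself (as $r_i\downarrow 0$ the ratio $U_i'/U_i$ blows up and $\log U_i\to-\infty$, and the iterates start from $\phi=0$, so a complete proof must show every $r_i$ is strictly positive after finitely many frames or start the analysis there) afflicts the paper's proof equally and is not a defect specific to your argument; and your claim of a \emph{unique} maximizer $\phi^\star$ is a slight overstatement, since $L$ is strictly concave in the rates $r_i$ but only concave in $\phi$ (the linear map $\phi\mapsto r$ may have a nontrivial kernel) --- only uniqueness of the optimal value is needed, so this does not harm your conclusion.
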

\begin{proof}
Suppose $\lim_{k \rightarrow \infty}L(\phi[k])$ does not achieve the maximum of the optimization problem. There exists $\delta >   0$, $\lambda > 0$, and positive integer $K$ such that for all $k > K$, there exists some $i_k \in M$ and $z_k \in Z$ so that $\phi_{i_k,b(i_k),z_k}[k]>\delta$ and  $\frac{U'_{i_k}(\phi_{i_k,b(i_k),z_k})H_{i_k,b(i_k),z_k}}{U_i(\phi_{i_k,b(i_k),z_k})}=\max_{j:b(j)=b(i_k)}\frac{U'_j(\phi_{j,b(j),z_k})H_{j,b(j),z_k}}{U_j(\phi_{j,b(j),z_k})}-\lambda$. At this point, we have 
\begin{equation*}\label{eqn:L_maximize}
\begin{aligned}
& L(\phi[k+1])-L(\phi[k]) \geq L'(\phi[k]))\Delta \phi[k]-\frac{a}{k^2}\\
& =\sum_{b \in B, z \in Z}\sum_{i}\frac{U'_i(\phi_{i,b(i),z}[k])H_{i,b(i),z}}{U_i(\phi_{i,b(i),z}[k])}\Delta\phi_{i,b(i),z}[k]-\frac{a}{k^2}\\
& =\frac{\delta\lambda}{k}-\frac{a}{k^2} \geq \frac{\delta\lambda}{2k},
\end{aligned}
\end{equation*}
for large enough $k$. Since $\sum_{k=1}^\infty \frac{1}{k}=\infty$, which is a contradiction. As a result, $\lim_{k \rightarrow \infty}L(\phi[k])$ achieves the maximum of the optimization problem.
\end{proof}


\section{Centralized Optimization Algorithm}\label{sec:Algorithm}
Our centralized resource scheduling algorithm allocates resources with utility proportional fairness, which is the objective of our problem formulation. The eNodeB allocates the resource block $z$ for the UE that has the maximum $U'(\phi_{i,b(i),z})H_{i,b(i),z}/ U(\phi_{i,b(i),z})$. Since the optimization problem is solved using the utility functions the priority will be given to the sigmoidal functions which have more strict delay and throughput requirements. 

Algorithm (\ref{alg:eNodeB}) shows our resource scheduling algorithm. This algorithm allocates resources with utility proportional fairness, which is the objective of the problem formulation. The eNodeB runs the algorithm and makes resource scheduling decisions.  

\begin{algorithm}
\caption{Resource Scheduling Algorithm}\label{alg:eNodeB}
\begin{algorithmic}
\STATE {$\phi_{i,b(i),z}=0$; $r_i=0$}
\FOR {$z=1 \rightarrow |Z|$}
	\STATE {Estimate the channel gain $H_{i,b(i),z}$}
	\IF {$l = \arg \max_j\frac{U'_j(\phi_{j,b(j),z})H_{j,b(j),z}}{U_j(\phi_{j,b(j),z})}$} 
	\STATE {$\phi_{l,b(l),z}[k+1]=\frac{k-1}{k}\phi_{l,b(l),z}[k]+\frac{1}{k}$} \\
	\COMMENT {Resource block $z$ allocated to UE $l$}   
	\STATE {$\phi_{i,b(i),z}[k+1]=\frac{k-1}{k}\phi_{i,b(i),z}[k]$} \\
	\COMMENT {For $i \neq l$}
\ENDIF 
\ENDFOR
\end{algorithmic}
\end{algorithm}

\section{Simulation Results}\label{sec:sim}

In this section, we present and compare the simulation results for both the application-aware resource scheduling and conventional proportional fairness algorithms. 

We initially present the simulation results of six utility functions corresponding to the UEs shown in Figure \ref{fig:SigLogUtility}. We use three normalized sigmoidal-like functions that are expressed by equation (\ref{eqn:sigmoid}) with different parameters, $a = 5$, $b=10$ which is an approximation to a step function (e.g. VoIP), $a = 3$, $b=20$ which is an approximation of an adaptive real-time application (e.g. standard definition video streaming), and $a = 1$,  $b=30$ which is also an approximation of an adaptive real-time application (e.g. high definition video streaming). We use three logarithmic functions that are expressed by equation (\ref{eqn:log}) with $r_{max}=100$ and different $k_i$ parameters which are approximations for delay tolerant applications (e.g. FTP). We use $k =\{15, 3, 0.5\}$. The simulation was run in MATLAB. The algorithm in (\ref{alg:eNodeB}) was applied to the logarithmic and sigmoidal-like utility functions listed above. Unity channel gain is assumed for each UE. The Quality of Experience (QoE) is calculated for each UE when the utility proportional fairness approach is used. The results are shown in subplot one of Figure \ref{fig:sim:SubPlots}. The bandwidth allocated for the sigmoidal functions are higher since sigmoidal-like utility functions have priority over the logarithmic utility functions. The QoE is above $50\%$ for all the users. 

\begin{figure}
    \centering
    \includegraphics[width=3.5in]{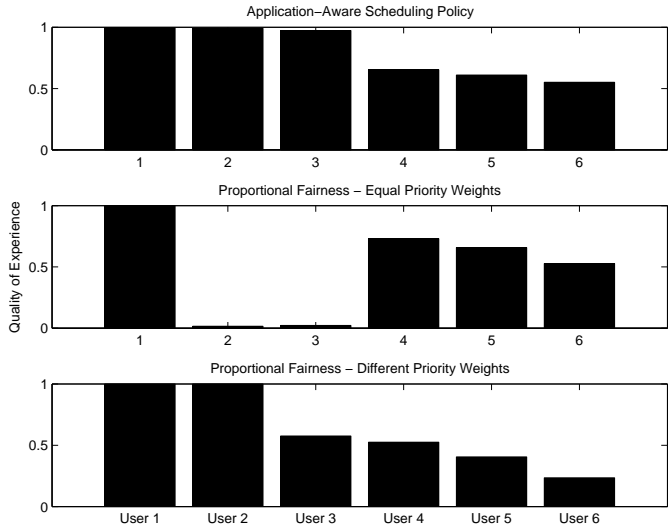}
    \caption{The quality of experience with the application-aware and conventional proportional fairness scheduling policies.}
    \label{fig:sim:SubPlots}
\end{figure} 


Secondly, we present the simulation results when a conventional proportional fairness approach is followed as in \cite{SelfOrganizedLTE}. We assumed unity channel gain for this simulation, as well. Priority weights, $w_i$, which are user-dependent priority indicators are used in \cite{SelfOrganizedLTE} for each UE. In this algorithm, the weighted proportional fairness is achieved by scheduling the resource block to the UE with the maximum $\frac{w_iH_{i,b(i),z}}{r_i}$. We used two different sets of priority weights in our simulation. Initially, equal weights were applied to each user. This is the worst case scenario when all the applications have the same priority. The QoE is calculated for each user. It is expected that the QoE will be low for the users with real time applications. The results are presented in subplot two of Figure \ref{fig:sim:SubPlots}. The QoE for users 2 and 3 are very close to zero.    


We set the priority weights equal to 10 for the first three users and to 1 for the last three users next. The QoE is calculated again for each user. The results are presented in subplot three of Figure \ref{fig:sim:SubPlots}. The QoE is better compared to equal weight case especially for UE 2 and 3. However, the optimal resource scheduling with maximum throughput is still not achieved since a priori assigned weights are used and the application delay and throughput requirements are not fully taken into account.  


Finally, we compare the results of the maximization problems for all three cases. The objective function values are plotted in Figure \ref{fig:sim:product_comparison}. The results show that the application-aware resource block scheduling increases the total system throughput and user satisfaction. 

\begin{figure}
    \centering
    \includegraphics[width=3.5in]{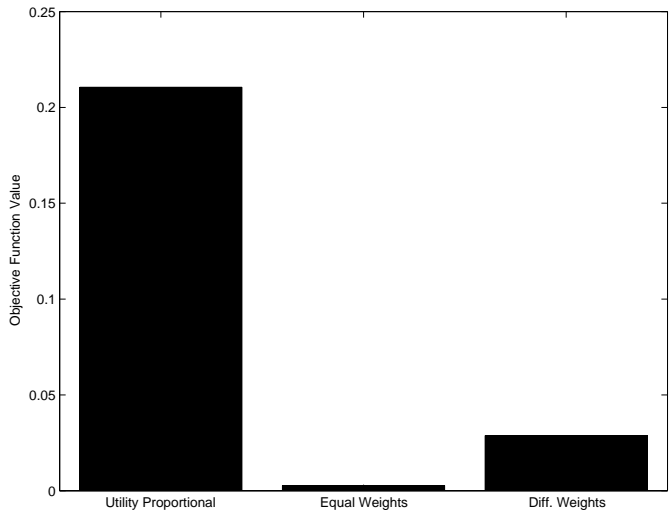}
    \caption{The values of the objective functions with utility proportional and conventional proportional fairness techniques.}
    \label{fig:sim:product_comparison}
\end{figure} 


\section{Conclusion}\label{sec:conclude}
In this paper, we introduced an application-aware resource block scheduling algorithm for LTE systems. The proposed approach assigns resource blocks to the UEs based on the application latency and bandwidth requirements. It gives priority to inelastic traffic compared to delay-tolerant traffic. We first showed that the optimization problem is convex and our scheduling algorithm exists and is optimal. Then, we provided our centralized scheduling algorithm and presented the simulation results. The results show that the QoE is higher with our application-aware approach compared to the proportional fairness approach.   
\bibliographystyle{ieeetr}
\bibliography{pubs}

\begin{thebibliography}{10}

\bibitem{CiscoVNIForecast}
``{Cisco VNI Mobile Forecast, http://www.cisco.com/c/dam/assets/sol/sp/ \\
  vni/forecast\_highlights\_mobile/index.html}.''

\bibitem{YouTubeTraffic}
P.~A. J. N.-O. J.~J. Ramos-Munoz, J. Prados-Garzon and J.~M. Lopez-Soler,
  ``Characteristics of mobile youtube traffic,'' {\em IEEE Wireless
  Communications Magazine}, February 2014.

\bibitem{LTEBook}
I.~T. S.~Sesia and M.~Baker, {\em LTE - The UMTS Long Term Evolution: From
  Theory to Practice}.
\newblock John Wiley Son, 2011.

\bibitem{PropFairScheduling}
V.~Vukadinovic and G.~Karlsson, ``Video streaming in 3.5g: On throughput-delay
  performance of proportional fair scheduling,'' in {\em IEEE International
  Symposium on Modeling, Analysis, and Simulation of Computer and
  Telecommunication Systems}, 2006.

\bibitem{Ahmed_Utility1}
A.~Abdel-Hadi and C.~Clancy, ``A utility proportional fairness approach for
  resource allocation in {4G-LTE},'' in {\em IEEE International Conference on
  Computing, Networking and Communications: Computing, Networking and
  Communications Symposium (ICNC'14 - CNC)}, 2014.

\bibitem{Ahmed_Utility2}
A.~Abdel-Hadi and C.~Clancy, ``A robust optimal rate allocation algorithm and
  pricing policy for hybrid traffic in {4G-LTE},'' in {\em IEEE
  24$^{\text{th}}$ International Symposium on Personal, Indoor and Mobile Radio
  Communications (PIMRC)}, 2013.

\bibitem{Ahmed_Utility3}
A.~Abdel-Hadi, C.~Clancy, and J.~Mitola, ``A resource allocation algorithm for
  multi-application users in {4G-LTE},'' in {\em ACM Proceedings of the
  19$^{\text{th}}$ Annual International Conference on Mobile Computing and
  Networking (MobiCom)}, 2013.

\bibitem{Haya_Utility1}
H.~Shajaiah, A.~Abdel-Hadi, and C.~Clancy, ``Utility proportional fairness
  resource allocation with carrier aggregation in {4G-LTE},'' in {\em IEEE
  Military Communications Conference (MILCOM)}, 2013.

\bibitem{Haya_Utility2}
H.~Shajaiah, A.~Abdel-Hadi, and C.~Clancy, ``Spectrum sharing between public
  safety and commercial users in {4G-LTE},'' in {\em IEEE International
  Conference on Computing, Networking and Communications (ICNC)}, 2014.

\bibitem{Ahmed_Utility4}
A.~Abdel-Hadi and C.~Clancy, ``An optimal resource allocation with joint
  carrier aggregation in {4G-LTE},''
\newblock submission.

\bibitem{MultiBand}
I.-H. Hou and P.~Gupta, ``Distributed resource allocation for proportional
  fairness in multi-band wireless systems,'' in {\em IEEE International
  Symposium on Information Theory Proceedings}, 2011.

\bibitem{SelfOrganizedLTE}
I.-H. Hou and C.~S. Chen, ``Self-organized resource allocation in lte systems
  with weighted proportional fairness,'' in {\em IEEE International Conference
  in Communications (ICC)}, 2012.

\bibitem{DL_PowerAllocation}
J.-W. Lee, R.~R. Mazumdar, and N.~B. Shroff, ``Downlink power allocation for
  multi-class wireless systems,'' {\em IEEE/ACM Trans. Netw.}, vol.~13,
  pp.~854--867, Aug. 2005.

\bibitem{UtilityFairness}
G.~Tychogiorgos, A.~Gkelias, and K.~K. Leung, ``Utility-proportional fairness
  in wireless networks.,'' in {\em PIMRC}, pp.~839--844, IEEE, 2012.

\end{thebibliography}
\end{document}